\newtheorem{definition}{Definition}
\newtheorem{proposition}{Proposition}
\newtheorem{corollary}{Corollary}
\newtheorem{observation}{Observation}
\newtheorem{remark}{Remark}
\newcommand{\<}{\langle}
\renewcommand{\>}{\rangle}
\newcommand{\C}{{\cal C}}
\begin{document}

\title{Low-dimensional quite noisy bound entanglement with
  cryptographic key}

\author{\L{}ukasz Pankowski}
\affiliation{Institute of Informatics, University of Gda\'nsk,
  Gda\'nsk, Poland}
\affiliation{Institute of Theoretical Physics and Astrophysics,
  University of Gda\'nsk, Gda\'nsk, Poland}

\author{Micha{\l} Horodecki}
\affiliation{Institute of Theoretical Physics and Astrophysics,
  University of Gda\'nsk, Gda\'nsk, Poland}

\begin{abstract}
  We provide a class of bound entangled states that have positive
  distillable secure key rate.  The smallest state of this kind is $4
  \otimes 4$.
  Our class is a generalization of the class presented in
  \cite{lowdim-pptkey}.  It is much wider, containing, in particular,
  states from the boundary of PPT entangled states (all of the states
  in the class in \cite{lowdim-pptkey} were of this kind) but also
  states inside the set of PPT entangled states, even, approaching the
  separable states.
  This generalization comes with a price: for the wider class a
  positive key rate requires, in general, apart from the \emph{one-way}
  Devetak-Winter protocol (used in \cite{lowdim-pptkey}) also
  the recurrence preprocessing and thus effectively is a \emph{two-way}
  protocol.
  We also analyze the amount of noise that can be admixtured to the
  states of our class without losing key distillability property
  which may be crucial for experimental realization.
  The wider class contains key-distillable states with higher entropy
  (up to 3.524, as opposed to 2.564 for the class in
  \cite{lowdim-pptkey}).
\end{abstract}

\maketitle

\section{Introduction}

Quantum cryptography, pioneered by Wiesner \cite{Wiesner}, allows to
obtain cryptographic key based on physical impossibility of
eavesdropping. Namely, if the transmitted signal is encoded into
quantum states, then by reading it, eavesdropper always introduces
noise into the signal. Thus Alice and Bob -- the parties who want to
communicate privately -- can measure the level of noise and detect
whether their transmission is secure (even if the noise was solely due
to eavesdropping). There are two types of quantum key distribution
protocols: \emph{prepare and measure} (as the original BB84 protocol
\cite{BB84}) and protocols based on a shared entangled state (originated
from the Ekert's protocol \cite{Ekert91}). For quite a time security
proofs of prepare and measure protocols had been based on showing
equivalence to the distillation (by local operations and classical
communication) of maximally entangled states (the first such proof is
due to Shor and Preskill \cite{ShorPreskill}).  It have led to a
belief that security of the quantum cryptography is always connected
to the distillation of the maximally entangled states (this issue was
perhaps first touched by Gisin and Wolf \cite{GisinWolf_linking}).

This belief suggested that one could not obtain secure key from bound
entangled states \cite{bound}, i.e., states from which maximally
entangled states cannot be distilled. On the contrary, the
key-distillable bound entangled states have been found \cite{pptkey}
and examples of low dimensional states have been provided
\cite{lowdim-pptkey}.  The multipartite case was also considered
\cite{AugusiakH2008-multi}.  There are two approaches to obtaining
cryptographic key from bound entangled PPT states: one is based on
approximating private bit with a PPT state \cite{pptkey,keyhuge} and
the other one -- on mixing orthogonal private bits
\cite{lowdim-pptkey}.

This paper continues on the second approach. The low dimensional
key-distillable states with positive partial transpose \footnote{If a
  state has positive partial transpose (PPT) then one cannot distill
  maximally entangled state from it. It is an long-standing open
  question whether PPT is also a necessary condition for
  non-distillability of maximal entanglement \cite{RMPK-quant-ent}
  (for recent development, see \cite{PankowskiPHH-npt2007}).} (hence,
bound entangled) presented in \cite{lowdim-pptkey} were lying on the
boundary of PPT states and existence of the key-distillable states
inside of PPT states was argued by the continuity argument, without
giving the explicit form of those inner states. In this paper we
present a wider class of PPT entangled key-distillable states
including states inside the set of PPT states even approaching the set
of separable states. We analyze properties of this class, as well as
provide some more general criteria of key distillability, by
exploiting criterion provided in \cite{acin-2006-73}.  This criterion
was earlier applied to analyze some PPT states in \cite{Bae2008} (see
also \cite{PhysRevA.75.032306} in this context).

The motivation behind the search for new bound entangled states with
distillable key, is two-fold. First of all, there is a fundamental
open question, whether from all entangled states one can draw secure
key.  To approach this question, one needs, in particular, to gather
more phenomenology on the issue of drawing key from bound entangled
states. In this paper, we have pushed this question a bit by showing
explicitly
that PPT key-distillable states can be in the interior of PPT states,
even, approaching the set of separable states.  Also, our general
criterion of key distillability can serve for searching to what extent
entanglement can provide for secure key.

Another motivation comes from recent experiments, where bound
entanglement was implemented in labs
\cite{Experimental-bound-Bourennane,Experimental-bound-Lavoie,2010PhRvA..81d0304K,2010arXiv1005.1965B,2010arXiv1006.4651D}.
In the experiments, usually, a four-partite bound entangled Smolin
state was used, which allows for a number of non-classical effects
being manifestations of true entanglement content of such state.  We
believe that low-dimensional bound entangled key-distillable states
are also good candidates for experimental implementation, providing a
non-classical effect -- possibility of distilling secure key.  This
requires states which are robust against noise, to facilitate the
process of preparing them in a lab. In this paper, we analyze
robustness of key-distillable states as well as provide very noisy
states, having, in particular, relatively large entropy (\mbox{c.a.}
3.5 bits versus 4 bits of maximal possible entropy). Last but not
least, the key-distillable bound entangled states are strictly related
to the effect of superactivation of quantum capacity
\cite{SmithYard-2008}, and our class may be further analyzed in this
respect (in this paper, we have provided some exemplary calculations).

The paper is organized as follows. In \mbox{Sec.} \ref{sec:prelim} we
review basic facts about general theory of distillation of secure key
from quantum states of \cite{keyhuge}. In particular, we describe
technique called the privacy squeezing. In \mbox{Sec.} \ref{sec:class} we
introduce our class of states which are PPT and key-distillable. We
verify that they lie inside the set of PPT states, touching the set of
separable states. Moreover, we check robustness of the property of
key-distillability. We also give the explicit form of an important
subset of our states as
mixtures of pure states in \mbox{Sec.} \ref{sec:exp}). In
\mbox{Secs.} \ref{sec:entropy} and \ref{sec:erasure} we examine entropic
properties of our states and their relation with Smith-Yard
superactivation of quantum capacity phenomenon. Finally, in
\mbox{Sec.} \ref{sec:general-key}, we provide a general sufficient
condition for distilling private key from quantum states of local
dimension not less than 4.

\section{Preliminaries}
\label{sec:prelim}

Let us first recall some important concepts of classical key
distillation from quantum states, covered in detail in \cite{keyhuge}.

A general state containing at least one bit of perfectly secure key is
called the \emph{private bit} or \emph{pbit} \cite{keyhuge}.  A
private bit in its so-called $X$-form is given by
\begin{align}
  \label{eq:pbit}
  \gamma(X) = \frac12
  \begin{bmatrix}
    \sqrt{XX^\dagger} & 0 & 0 & X \\
    0 & 0 & 0 & 0 \\
    0 & 0 & 0 & 0 \\
    X^\dagger & 0 & 0 & \sqrt{X^\dagger X}
  \end{bmatrix}
\end{align}
where $X$ is an arbitrary operator satisfying $\|X\|=1$ (here and
throughout the paper, we use the trace norm, that is the sum of the
singular values of an operator). The private bit has four subsystems:
$ABA'B'$ where block matrix \eqref{eq:pbit} represents $AB$ subsystem
and the blocks are operators acting on an $A'B'$ subsystem. Subsystems
$A$ and $B$ are single qubit subsystems while dimensions of $A'$ and
$B'$ must be greater or equal to 2, we assume dimensions $A'$ and $B'$
are equal and denote them by $d$.  Subsystem $AA'$ belongs to Alice
while subsystem $BB'$ belongs to Bob.  Every state presented
in the block matrix form throughout the paper has this structure.  The
bit of key contained in a private bit is
obtained by measuring subsystems $A$ and $B$ in the standard basis;
therefore, subsystem $AB$ is called the \emph{key part} of the state,
while subsystem $A'B'$ is called the \emph{shield} of the state, as it
protects correlations contained in the key part from an
eavesdropper.  Note that it may happen that Eve possesses a copy of
the shield subsystem (when, e.g., the shield consists of two flag states
– states with disjoint support) yet it does not hurt because the very
presence of the shield subsystem in Alice and Bob's hands
protects the bit of key.

For a general state with $ABA'B'$ subsystems (i.e., not necessarily a
private bit) one can infer possibility of distillation of private key
using the method called the \emph{privacy squeezing} \cite{keyhuge}.
Namely, we consider the following type of protocols: one measures the
key part in the standard basis and classically process the outcomes
(\mbox{cf.} \cite{acin-2006-73} for two-qubit states).  Given a
protocol of this type we would like to know whether it can distill key
from the state.  To this end, we construct a two qubit state in the
following way: one applies to the original state the so-called
\emph{twisting} operation, i.e., a unitary transformation of the
following form
\begin{align}
  U = \sum_{ij} |ij⟩_{AB}⟨ij| ⊗ U_{ij}^{A'B'}
\end{align}
and perform partial trace over $A'B'$.  Now, it turns out that if we
apply the protocol to the original state we obtain no less key than we
would obtain from the above two qubit state using the same protocol.

Therefore, if we apply a cleverly chosen twisting, we may infer
key-distillability of the original state, by examining a two-qubit
state (i.e., a much simpler object). This technique is called
the \emph{privacy squeezing}. The role of twisting is to `squeeze' the
privacy present in the original state into its key part, where it is
then more easily detectable, e.g., by protocols designed for two-qubit
states (see e.g., \cite{Gottesman-Lo,acin-2006-73,Renner2006-PhD}).

To explain why the two qubit state cannot give more key than the
original state (within the considered class of protocols) we invoke
the following result of \cite{keyhuge}.  One considers a state of
three systems: a quantum one -- Eve's system and two classical ones --
the registers holding the outcomes of measurement of the key part (the
state is therefore called a \emph{ccq state}). Now, it turns out that
twisting does not change this state.  However, in the considered class
of protocols Alice and Bob use only classical registers, so the output
of such protocols depends solely on the ccq state. Thus the key
obtained with and without twisting is exactly the same. This holds,
even though twisting is a non-local operation and the resulting state
can be more powerful under all other respects (such as drawing key by
some other type of protocols). Next, if we additionally trace out the
shield, i.e., the subsystem $A'B'$, this means that the resulting ccq
state differs from the original ccq state only by Eve having, in
addition, the shield. Thus, if any key can be obtained from it, it can
only be less secure than the key obtained from the original ccq state.

It turns out that for any `spider' state, i.e., state of the form
\begin{align}
  ρ = \begin{bmatrix}
    C & & & D \\
    & E & F \\
    & F^{†} & E' \\
    D^{†} & & & C'
  \end{bmatrix}
  \label{eq:spider}
\end{align}
\begin{figure*}[t]
  \centering
  \begin{align}
    \label{eq:the-class-blocks}
    \varrho = \frac12
    \begin{bmatrix}
      (λ_1 + λ_2) \sqrt{XX^\dagger} &  & & (λ_1 - λ_2) X \\
      \, & (λ_3 + λ_4)\sqrt{YY^\dagger} & (λ_3 - λ_4) Y \\
      \, & (λ_3 - λ_4)Y^\dagger & (λ_3 + λ_4)\sqrt{Y^\dagger Y} \\
      (λ_1 - λ_2) X^\dagger & & & (λ_1 + λ_2) \sqrt{X^\dagger X}
    \end{bmatrix}
  \end{align}
  \caption{Block matrix form of mixture of four private bits.}
  \label{fig:the-class-blocks}
\end{figure*}%
(where we have omitted zero blocks for clarity) there exists such a
twisting operation that the matrix elements of the two qubit state,
obtained by tracing out the $A'B'$ subsystem after applying the
twisting, are equal to trace norms of the corresponding blocks in the
original state:
\begin{align}
  σ = \begin{bmatrix}
    \|C\| & & & \|D\| \\
    & \|E\| & \|F\| \\
    & \|F\| & \|E'\| \\
    \|D\| & & & \|C'\|
  \end{bmatrix}
  \label{eq:ps-spider}
\end{align}
(we use here that $\|A\|=\|A^\dagger\|$ for trace norm).  This
twisting is in a sense optimal for the spider states. We call the two
qubit state \eqref{eq:ps-spider} the \emph{privacy-squeezed state} of
the original state. If a spider state satisfies $\|C\|=\|C'\|$ and
$\|E\|=\|E'\|$ than its privacy-squeezed state is a Bell diagonal
state.

For a deeper discussion of the privacy squeezing see \cite{keyhuge},
although the name \emph{spider state} is not used there.

\section{Distilling key from PPT mixtures of private states}
\label{sec:class}

Here, we construct a class of bound entangled states which are
key-distillable. They are mixtures of four orthogonal private bits of
some special form. We provide a sufficient condition to distill
cryptographic key from our class. The condition given in this section
is generalized to an arbitrary state in \mbox{Sec.}
\ref{sec:general-key}.

\subsection{Definition of the class}

Let us consider a class of states
\begin{align}
  \label{eq:the-class}
  \varrho = λ_1 γ_1^+ + λ_2 γ_1^- + λ_3 γ_2^+ + λ_4 γ_2^-
\end{align}
which is a mixture of four orthogonal private bits which could be
considered analogues to the Bell states.  The construction is possible
in dimension $2d \otimes 2d$, with $d\geq 2$.

The four private bits are given by
\begin{align}
  \label{eq:the-class-gamma_i}
  \gamma_1^\pm = \gamma(± X), \quad
  \gamma_2^\pm = \sigma_x^A \gamma(± Y) \sigma_x^A
\end{align}
where $\sigma_x^A$ is a Pauli matrix $\sigma_x$ applied on subsystem
$A$, and by $\gamma(X)$ we mean a private bit written in its $X$-form
\eqref{eq:pbit}.

States given by \eqref{eq:the-class} and \eqref{eq:the-class-gamma_i}
have the block matrix form \eqref{eq:the-class-blocks} given on figure
\ref{fig:the-class-blocks}.

\begin{definition}
  We define the class $\C$ as the class of states given by
  \eqref{eq:the-class} and \eqref{eq:the-class-gamma_i} with operators
  $X$ and $Y$ related by
  \begin{align}
    \label{eq:Y}
    Y =\frac{X^\Gamma}{\|X^\Gamma\|}
  \end{align}
  where superscript Γ denotes the partial transposition in Alice
  versus Bob cut; and satisfying the following conditions: the
  diagonal blocks of \eqref{eq:the-class-blocks}, i.e., operators
  $\sqrt{XX^\dagger}$, $\sqrt{X^\dagger X}$, $\sqrt{YY^\dagger}$,
  $\sqrt{Y^\dagger Y}$ are all PPT-invariant, i.e., must satisfy
  $A=A^\Gamma$.
\end{definition}
(The relation
\eqref{eq:Y} and PPT-invariance of the diagonal blocks are necessary to
obtain simple conditions for the state to be PPT, given in \mbox{Sec.}
\ref{sec:ppt}).

In particular, the PPT-invariance of the diagonal blocks holds for
\begin{align}
  \label{eq:X}
  X = \frac{1}{u} \sum_{i,j=0}^{d-1} u_{ij} |ij⟩⟨ji|
\end{align}
where $u_{ij}$ are elements of some unitary matrix on ${\cal C}^d$ and
\begin{align}
  \label{eq:u}
  u = \sum_{i,j=0}^{d-1} |u_{ij}|.
\end{align}
For the operator $X$ given by \eqref{eq:X} we have
\begin{align}
  \|X^\Gamma\| = \frac du, \qquad
  \frac{1}{\sqrt{d}} ≤ \|X^\Gamma\| ≤ 1
\end{align}
where the minimum is achieved for the unimodular unitary
\cite{lowdim-pptkey} and maximum for the identity matrix.

We will sometimes write $ρ_U$ to denote the subclass of the class
$\C$ with operator $X$ given by \eqref{eq:X} or to
stress using a concrete unitary in the definition of $X$, in particular, we
will consider the subclass $ρ_H$ where $u_{ij}$ are elements of the
Hadamard unitary matrix.

In case of $d=2$ we will also consider the subclass of the class
$\C$ with operators $X$ and $Y$ given by
\begin{align}
  \label{eq:spider-Y}
  Y = q \, Y_{U_1} + (1 - q) \, σ_x^{A'} Y_{U_2} σ_x^{A'}, \quad
  X = \frac{Y^Γ}{\|Y^Γ\|}
\end{align}
where
\begin{align}
  \label{eq:spider-Y_U}
  Y_U = \frac{1}{d} \sum_{i,j=0}^{d-1} u_{ij} |ii⟩⟨jj|.
\end{align}
Unitaries $U_1$ and $U_2$ must have the same global phase, i.e.,
$\alpha_1=\alpha_2$ in the parametrization of a single qubit unitary
given by \eqref{eq:qubit-unitary} in the appendix.  In particular, one
may take $U_1=U_2$.

We also use an alternative parametrization in terms of $p$, α, and β
given by
\begin{align}
  p &\equiv λ_1 + λ_2 \in [0, 1] \\
  \alpha &\equiv \frac{λ_1 - λ_2}{λ_1 + λ_2} \in [-1, 1] \\
  \beta &\equiv \frac{λ_3 - λ_4}{λ_3 + λ_4} \in [-1, 1].
\end{align}

On the other hand, the original parameters $λ_i$ can be expressed
using $p$, α, and β as follows:
\begin{align}
  λ_{1,2} &= \frac{1 ± α}{2} p \\
  λ_{3,4} &= \frac{1 ± β}{2} (1 - p).
\end{align}

Both parametrizations are directly related with the privacy-squeezed
version of the state given by $\C$ and
\eqref{eq:the-class-gamma_i}, and constructed according to the formula
\eqref{eq:ps-spider}:
\begin{align}
\label{eq:ps-class}
\sigma=\sum_i \lambda_i |ψ_i\>\<ψ_i|=\frac12
  \begin{bmatrix}
    p &  & & \alpha p \\
    \, & (1-p) & \beta (1-p) \\
    \, & \beta (1-p) & (1-p) \\
    \alpha p & & & p
  \end{bmatrix}
\end{align}
where the Bell states $ψ_i$ are given by
\begin{align}
  \label{eq:bell-states}
  |ψ_{1,2}⟩ &= \frac{1}{\sqrt2} (|00⟩±|11⟩) \nonumber \\
  |ψ_{3,4}⟩ &= \frac{1}{\sqrt2} (|01⟩±|10⟩).
\end{align}
Thus, $\lambda_i$ are the eigenvalues of the privacy-squeezed state,
$p$ reports the balance between correlations and anti-correlations,
while $\alpha$ and $\beta$ report how coherences are damped.

A subclass of the class $\C$ with $X$ defined by
\eqref{eq:X} has been considered in \cite{lowdim-pptkey}:
\begin{align}
  \label{eq:subclass}
  \tilde\varrho = λ_1 γ_1^+ + λ_3 γ_2^+.
\end{align}

The class $\C$ is much wider then \eqref{eq:subclass},
in particular, it contains key-distillable PPT states arbitrary close
to the separable states, but this comes with a price: we have to, in
general, use the recurrence preprocessing to obtain positive key rate for
$\C$ while for \eqref{eq:subclass} the sole
Devetak-Winter protocol is enough \cite{lowdim-pptkey}.

\subsection{Sufficient PPT conditions}
\label{sec:ppt}

For the states of the class $\C$ to be PPT (so that
maximal entanglement cannot be distilled from them) it is sufficient
to satisfy the following conditions
\begin{align}
  |λ_1 - λ_2| &≤ (1 - λ_1 - λ_2) \|X^Γ\|^{-1} \\
  \label{eq:ppt-cond-1b}
  |λ_3 - λ_4| &≤ (λ_1 + λ_2) \|X^Γ\|
\end{align}
or equivalently
\begin{align}
  \label{eq:ppt-cond-alpha}
  |\alpha| &≤ \min(1, \alpha_1) \\
  \label{eq:ppt-cond-beta}
  |\beta| &≤ \min(1, \alpha_1^{-1}) &
\end{align}
where
\begin{align}
  \label{eq:alpha1}
  \alpha_1 = \frac{1- p}{p} \|X^\Gamma\|^{-1}.
\end{align}
In particular, if $p=\tilde{λ}_1$ where $\tilde{λ}_1$ is given by
\eqref{eq:subclass-ppt-cond}, we have $\alpha_1=1$.  Moreover, if
$α=α_1β$ then ρ is a PPT-invariant state.

For the subclass \eqref{eq:subclass}, the above PPT conditions
collapse to a single PPT-invariant state, on the boundary of PPT
states, which satisfies
\begin{align}
  \label{eq:subclass-ppt-cond}
  λ_1 = \tilde{λ}_1 \equiv \frac{1}{1 + \|X^Γ\|}.
\end{align}

\subsection{Key distillability}
\label{sec:key}

We shall derive here a general sufficient condition for
key-distillability of the spider states with a Bell diagonal
privacy-squeezed state, which easily follows from combining
the privacy squeezing technique with the result of \cite{acin-2006-73} on
key distillation from two-qubit states.  It is enough for our
purposes, as states of our class are of that form.  (In \mbox{Sec.}
\ref{sec:general-key} we shall extend the key-distillability condition
to arbitrary states by exploiting twirling).

\begin{proposition}
  \label{prop:key}
  Let ρ be a state of the form
  \begin{align}
    \label{eq:th-key-rho}
    ρ =
    \begin{bmatrix}
      C & & & D \\
      & E & F \\
      & F^{†} & E' \\
      D^{†} & & & C'
    \end{bmatrix}
  \end{align}
  satisfying $\|C\|=\|C'\|$ and $\|E\|=\|E'\|$, i.e., ρ is a state
  having a Bell diagonal privacy-squeezed state.  If
  \begin{align}
    \label{eq:th-key-cond}
    \max(\|D\|,\|F\|) > \sqrt{\|C\| \|E\|}
  \end{align}
  then Alice and Bob can distill cryptographic key by first measuring
  the key part of many copies of the state ρ and than using the recurrence
  \cite{Maurer_key_agreement,BDSW1996} and the Devetak-Winter protocol
  \cite{DevetakWinter-hash}.
\end{proposition}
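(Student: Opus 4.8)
The plan is to collapse the four-party state $\rho$ to its two-qubit privacy-squeezed state and then invoke the two-qubit key-distillation criterion of \cite{acin-2006-73}. First I would apply the privacy squeezing recalled in Sec.~\ref{sec:prelim}. Since $\rho$ has the spider form \eqref{eq:spider}, there is a twisting after which tracing out the shield $A'B'$ produces the two-qubit state $\sigma$ of \eqref{eq:ps-spider}, whose entries are the trace norms $\|C\|,\|D\|,\|E\|,\|F\|$; moreover, by the argument of \cite{keyhuge}, the key extractable from $\rho$ by measuring the key part and then post-processing classically is no smaller than the key extractable from $\sigma$ by the \emph{same} post-processing, because twisting leaves the ccq state invariant and discarding the shield can only help Eve. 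The hypotheses $\|C\|=\|C'\|$ and $\|E\|=\|E'\|$ make $\sigma$ Bell diagonal. It therefore suffices to exhibit a \emph{measure-then-postprocess} protocol distilling key from the single two-qubit Bell-diagonal state $\sigma$.

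For that two-qubit problem I would follow \cite{acin-2006-73}. Measuring both qubits of $\sigma$ in the standard basis yields a ccq state in which Alice's and Bob's bits disagree with probability $2\|E\|$, while Eve's side information is controlled by the coherences $\|D\|$ and $\|F\|$ through her purification. I would then run the recurrence protocol \cite{Maurer_key_agreement,BDSW1996} as classical advantage distillation: group the bits into blocks, have Alice broadcast block parities, and retain a bit only when Bob's parity agrees. Iterating this two-way step drives the Alice--Bob disagreement toward zero while, crucially, suppressing Eve's accessible information at a rate fixed by the coherences. Finally, on the pre-processed bits I would apply the one-way Devetak--Winter hashing \cite{DevetakWinter-hash}, whose rate $I(A:B)-I(A:E)$ turns positive once recurrence has made the Alice--Bob correlations dominate Eve's knowledge. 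The appearance of both a two-way (recurrence) stage and a one-way (Devetak--Winter) stage is exactly what the statement asserts.

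The crux is to pin down the exact threshold. I would track the ccq state through one round of recurrence, recomputing the joint $AB$ distribution and re-optimizing Eve's measurement on the XOR-correlated copies, and then analyze the asymptotic (fixed-point) behaviour of the gap between the Alice--Bob correlation and Eve's information. The target is to show that this gap becomes positive precisely when the larger coherence exceeds the geometric mean of the diagonal weights, i.e. when $\max(\|D\|,\|F\|)>\sqrt{\|C\|\,\|E\|}$ as in \eqref{eq:th-key-cond}; equivalently, in the parametrization of \eqref{eq:ps-class} with $\|C\|=p/2$, $\|D\|=\alpha p/2$, $\|E\|=(1-p)/2$, $\|F\|=\beta(1-p)/2$, when $\max\!\left(\alpha p,\beta(1-p)\right)>\sqrt{p(1-p)}$.

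The step I expect to be the main obstacle is controlling $I(A:E)$ across the recurrence, since Eve's optimal attack must be re-optimized at each round and her information need not simply shrink by a fixed factor; showing that the correlation-to-information ratio stabilizes exactly at the geometric-mean boundary is where the real work lies. By contrast, the privacy-squeezing reduction to $\sigma$ and the concluding Devetak--Winter hashing are comparatively routine, so I would isolate the recurrence analysis and treat the rest as bookkeeping.
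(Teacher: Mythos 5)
Your proposal follows essentially the same route as the paper: privacy squeezing collapses the spider state \eqref{eq:th-key-rho} to the Bell-diagonal two-qubit state $\sigma$ of \eqref{eq:ps-spider}, and the condition \eqref{eq:th-key-cond} is then exactly the criterion of \cite{acin-2006-73} for distilling key from $\sigma$ by standard-basis measurement followed by recurrence and Devetak--Winter hashing. The only divergence is that you propose to re-derive that two-qubit threshold (tracking Eve's information through the recurrence rounds), whereas the paper simply invokes it as an established result of \cite{acin-2006-73}; the step you flag as ``the real work'' is already done in that reference, so citing it closes your argument.
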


\noindent%
\begin{remark}
  Note that, interestingly, the condition \eqref{eq:th-key-cond} is
  equivalent to requiring that one of the matrices
  \begin{align}
    \begin{bmatrix}
      \|C\| & \|D\| \\
      \|D^\dagger\| & \|E\|
    \end{bmatrix},\quad
    \begin{bmatrix}
      \|C\| & \|F\| \\
      \|F^\dagger\| & \|E\|
    \end{bmatrix}\quad
  \end{align}
  is not a positive one.
\end{remark}

\begin{remark}
  Note that the right-hand side of \mbox{Eq.} \eqref{eq:th-key-cond}
  can also be written as $\frac12 \sqrt{p_e (1-p_e)}$ where $p_e$ is
  the probability of error (i.e. anticorrelation) when key part is
  measured in standard basis.
\end{remark}

\begin{proof}[Proof of the proposition \ref{prop:key}]
  We apply the privacy squeezing technique described in \mbox{Sec.}
  \ref{sec:prelim}, i.e., we show that the privacy-squeezed state of ρ
  is key-distillable by a protocol based on measuring the state
  locally in the standard basis and classical postprocessing.  This
  implies ρ is also key-distillable.

  The privacy-squeezed state is precisely of the form
  \eqref{eq:ps-spider} with $\|C\|=\|C'\|$ and $\|E\|=\|E'\|$, i.e.,
  it is a Bell diagonal state which can be written as
  \begin{align}
    \sigma = \frac12
    \begin{bmatrix}
      a &  & & d \\
      & e & f \\
      & f & e \\
      d & & & a
    \end{bmatrix}.
  \end{align}
  For such a state it was shown in \cite{acin-2006-73} that if
  $\max(|d|,|f|)>\sqrt{ae}$ then one can distill key by measuring the
  state locally in the standard basis, and processing the resulting
  classical data (actually, by using the recurrence followed by
  the Devetak-Winter protocol). This is precisely the type of protocols
  allowed by the privacy-squeezing technique described in \mbox{Sec.}
  \ref{sec:prelim}. In our case, the above conditions are simply the
  ones given in \eqref{eq:th-key-cond}.
\end{proof}

Due to the form \eqref{eq:ps-class} of the privacy-squeezed state of
the states from our class, we immediately obtain suitable conditions:
\begin{corollary}
  \label{cor:key}
  Let ρ be a state defined by formulas $\C$ and
  \eqref{eq:the-class-gamma_i} with arbitrary $X$ and $Y$ satisfying
  $\|X\|=\|Y\|=1$.  If
  \begin{align}
    \label{eq:key-cond-lambda}
    |λ_1 - λ_2| > \sqrt{(λ_1 + λ_2) (1 - λ_1 - λ_2)}
  \end{align}
  or equivalently if
  \begin{align}
    \label{eq:key-cond-alpha}
    |\alpha| > \sqrt{1 - p \over p}
  \end{align}
  then Alice and Bob can distill cryptographic key by first measuring
  the key part of many copies of the state ρ and than using the recurrence
  and the Devetak-Winter protocol.
\end{corollary}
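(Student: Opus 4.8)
The plan is to derive Corollary \ref{cor:key} as a direct specialization of Proposition \ref{prop:key}. The corollary concerns the states of the class $\C$, and I already know from \eqref{eq:ps-class} that such a state has a privacy-squeezed version of Bell-diagonal form. So the entire task reduces to translating the abstract hypothesis \eqref{eq:th-key-cond}, phrased in terms of trace norms of the blocks $C$, $D$, $E$, $F$, into the concrete hypothesis \eqref{eq:key-cond-lambda} phrased in terms of the mixing weights $\lambda_i$.

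First I would read off the block trace norms from the explicit block-matrix form \eqref{eq:the-class-blocks} of the class. Since $\|X\|=\|Y\|=1$ by assumption, the diagonal blocks satisfy $\|\sqrt{XX^\dagger}\|=\|\sqrt{X^\dagger X}\|=1$ and likewise for $Y$; hence $\|C\|=\|C'\|=\tfrac12(\lambda_1+\lambda_2)$ and $\|E\|=\|E'\|=\tfrac12(\lambda_3+\lambda_4)$, so the condition $\|C\|=\|C'\|$, $\|E\|=\|E'\|$ needed to invoke the proposition holds automatically. For the off-diagonal blocks, $\|D\|=\tfrac12|\lambda_1-\lambda_2|\,\|X\|=\tfrac12|\lambda_1-\lambda_2|$ and $\|F\|=\tfrac12|\lambda_3-\lambda_4|$. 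This step is just careful bookkeeping with the trace norm and the normalization $\|X\|=\|Y\|=1$.

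Next I would substitute these values into the key-distillability inequality \eqref{eq:th-key-cond}. Using $\lambda_3+\lambda_4 = 1-\lambda_1-\lambda_2$ (total probability), the right-hand side becomes $\sqrt{\|C\|\|E\|} = \tfrac12\sqrt{(\lambda_1+\lambda_2)(1-\lambda_1-\lambda_2)}$, and the left-hand side is $\tfrac12\max(|\lambda_1-\lambda_2|,|\lambda_3-\lambda_4|)$. The factors of $\tfrac12$ cancel. It then suffices to note that $|\lambda_1-\lambda_2| > \sqrt{(\lambda_1+\lambda_2)(1-\lambda_1-\lambda_2)}$ already forces the maximum to exceed the threshold, so hypothesis \eqref{eq:key-cond-lambda} implies \eqref{eq:th-key-cond} and the proposition applies. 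The equivalence with \eqref{eq:key-cond-alpha} is obtained by dividing through by $p=\lambda_1+\lambda_2$ and recalling $\alpha=(\lambda_1-\lambda_2)/(\lambda_1+\lambda_2)$, which turns the inequality into $|\alpha| > \sqrt{(1-p)/p}$.

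I do not expect a genuine obstacle here, since this is a substitution argument rather than a new proof; the only point requiring a moment of care is to confirm that the normalization assumption $\|X\|=\|Y\|=1$ (rather than the more restrictive relation \eqref{eq:Y} or \eqref{eq:X}) is all that is needed for the block norms to take the stated values, so that the corollary holds for the full stated generality and not merely for the explicitly parametrized subclasses. Once that is verified, the result is immediate.
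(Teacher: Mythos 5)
Your proof is correct and takes essentially the same route as the paper: the paper obtains Corollary \ref{cor:key} as an immediate specialization of Proposition \ref{prop:key}, using the fact that the privacy-squeezed state of a state of the class is the Bell-diagonal state \eqref{eq:ps-class} with entries $p=\lambda_1+\lambda_2$, $\alpha p=\lambda_1-\lambda_2$, $\beta(1-p)=\lambda_3-\lambda_4$. Your explicit bookkeeping of the block trace norms (including the observation that only $\|X\|=\|Y\|=1$ is needed, so the corollary holds beyond the subclasses with $X$, $Y$ related by \eqref{eq:Y}) is precisely the computation the paper leaves implicit behind the word ``immediately.''
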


Corollary \ref{cor:key} also holds if one uses $|λ_3 - λ_4|$ as the
left-hand side of
\eqref{eq:key-cond-lambda} or equivalently $|β|$ as the left-hand side of
\eqref{eq:key-cond-alpha}, however, in our paper, we do not use these
conditions.

\begin{observation}
  \label{obs:p-range-of-ppt-key}
  For a state of the class $\C$ to be both PPT and key
  distillable using corollary~\ref{cor:key} it must satisfy both
  \eqref{eq:ppt-cond-alpha} and \eqref{eq:key-cond-alpha}.  For a
  given value of the parameter $p$ there exist α satisfying both
  conditions iff $p \in (\frac12, p_{{\max}})$ where
  \begin{align}
     p_{{\max}} = \frac{1}{1 + \|X^Γ\|^2}.
  \end{align}
\end{observation}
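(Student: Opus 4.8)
The plan is to treat both displayed constraints purely as bounds on the single scalar $|\alpha|$ and to determine when the resulting admissible window is non-empty. First I would collect the two requirements in a common form: the PPT constraint \eqref{eq:ppt-cond-alpha} reads $|\alpha| \le \min(1,\alpha_1)$ with $\alpha_1 = \frac{1-p}{p}\|X^\Gamma\|^{-1}$ as in \eqref{eq:alpha1}, while the key-distillability condition \eqref{eq:key-cond-alpha} reads $|\alpha| > \sqrt{(1-p)/p}$. Thus the two together describe the window
\begin{align}
\sqrt{\frac{1-p}{p}} < |\alpha| \le \min(1,\alpha_1),
\end{align}
and an admissible $\alpha$ exists precisely when the open lower bound lies strictly below the closed upper bound, i.e. $\sqrt{(1-p)/p} < \min(1,\alpha_1)$. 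The inequality must be strict because the key bound is open and the PPT bound closed, so equality of the two bounds would leave no room for $|\alpha|$.

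Next I would invoke the elementary equivalence $x<\min(a,b)\iff (x<a \text{ and } x<b)$ to split the condition into two independent inequalities. The first, $\sqrt{(1-p)/p}<1$, is equivalent to $(1-p)/p<1$, hence to $p>\tfrac12$. The second, $\sqrt{(1-p)/p}<\alpha_1=\frac{1-p}{p}\|X^\Gamma\|^{-1}$, I would simplify by dividing both sides by the strictly positive quantity $\sqrt{(1-p)/p}$ (here $0<p<1$, which is guaranteed in the relevant regime and is what makes the square root and $\alpha_1$ meaningful). This yields $\|X^\Gamma\|<\sqrt{(1-p)/p}$, equivalently $\|X^\Gamma\|^2 p < 1-p$, i.e.
\begin{align}
p < \frac{1}{1+\|X^\Gamma\|^2} = p_{\max}.
\end{align}

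Combining the two inequalities gives exactly $p\in(\tfrac12,p_{\max})$, as claimed, and conversely any $p$ in this open interval makes the window non-empty so that a witnessing $\alpha$ can be chosen. I would close by noting that the interval is non-empty precisely when $p_{\max}>\tfrac12$, which by the bound $\frac{1}{\sqrt d}\le\|X^\Gamma\|\le 1$ recorded earlier holds for every non-trivial choice with $\|X^\Gamma\|<1$ and degenerates to the empty set only in the limiting identity case $\|X^\Gamma\|=1$.

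The only step requiring genuine care is the bookkeeping of the strict-versus-closed inequalities and the clean reduction of the $\min$ to two separate conditions; the remaining work—clearing the square root and solving the resulting linear inequality for $p$—is entirely routine algebra, so I do not anticipate a substantial obstacle.
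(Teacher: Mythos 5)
Your proof is correct, and it follows exactly the argument the paper intends: the Observation is stated without an explicit proof, and the natural justification is precisely your reduction of \eqref{eq:ppt-cond-alpha} and \eqref{eq:key-cond-alpha} to the window $\sqrt{(1-p)/p} < |\alpha| \le \min(1,\alpha_1)$, whose non-emptiness splits into $p>\tfrac12$ and $p<p_{\max}$. The bookkeeping of the strict lower bound versus the closed upper bound, and the division by $\sqrt{(1-p)/p}$ to isolate $p<\frac{1}{1+\|X^\Gamma\|^2}$, are handled correctly.
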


\subsection{Tolerable white noise}

We say that δ is the \emph{tolerable noise} of a key distillation
protocol for a state ρ if for any $ε<δ$ the state $ρ_ε$ with ε of the
white noise admixtured
\begin{align}
  \varrho_ε = (1-ε) \varrho + ε \frac{I}{d^2}
\end{align}
remains key-distillable with that protocol.

Having $p>\frac12$, the tolerable noise of the Devetak-Winter protocol
with the recurrence preprocessing for the class $\C$ is
given by
\begin{align}
  \label{eq:tolerable-noise}
  \delta &= 1-{{1}\over{\sqrt{8(λ_1^2 + λ_2^2) - 4(λ_1 + λ_2) + 1}}} \\
  &= 1-{{1}\over{\sqrt{4\left(1 + α^2\right)\,p^2-4\,p+1}}}.
\end{align}

In particular for a key-distillable PPT state $\tilde{ρ}_H$ with
$λ_1=\tilde{λ}_1$ where $\tilde{λ}_1$ is given by
\eqref{eq:subclass-ppt-cond} the tolerable noise for the Devetak-Winter
protocol with the recurrence preprocessing \eqref{eq:tolerable-noise} is
approximately equal to 0.155 while for the sole Devetak-Winter
protocol it is approximately equal to 0.005, i.e., it is 31 times
smaller.  See figure~\ref{fig:noise}.

\begin{figure}
  \centering
  \includegraphics[width=8cm]{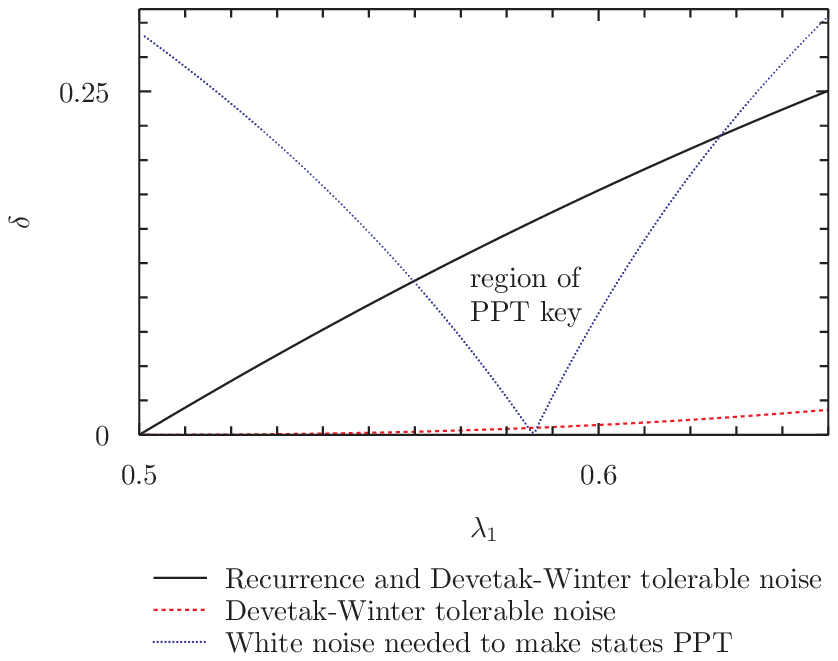}
  \caption{Comparison of $\tilde\varrho_H$ tolerable noise in case of
    using the Devetak-Winter protocol with and without the recurrence
    preprocessing.}
  \label{fig:noise}
\end{figure}

\subsection{Separability}

Given a state $ρ_U$ of the class $\C$ with $X$ given
by \eqref{eq:X} and $d=2$, i.e., ρ is a state of $4 ⊗ 4$ system, we may
try to decompose ρ into a mixture of four two qubit states.  The
particular decomposition, which we propose below, is possible if
\begin{align}
  \label{eq:sep-precond-lambda}
  |λ_3 - λ_4| ≤ (1 - λ_1 - λ_2) \|X^Γ\|
\end{align}
\begin{figure*}[t]
  \centering
  \begin{align}
    \label{eq:rho_ij}
    ρ_{ij} = \frac12
    \begin{bmatrix}
      λ_1 + λ_2 & & & (λ_1 - λ_2) e^{i ϕ_{ij}} \\
      & λ_3 + λ_4 & (λ_3 - λ_4) \|X^Γ\|^{-1} e^{i ϕ_{ij}} & \\
      & (λ_3 - λ_4) \|X^Γ\|^{-1} e^{-i ϕ_{ij}} & λ_3 + λ_4 & \\
      (λ_1 - λ_2)e^{-i ϕ_{ij}} &  &  & λ_1 + λ_2
    \end{bmatrix}
  \end{align}
  \caption{The form of two qubit Bell diagonal states from
    decomposition of a state $ρ_U$ with $d=2$.}
  \label{fig:rho_ij}
\end{figure*}%
or equivalently if
\begin{align}
  \label{eq:sep-precond-beta}
  |\beta| ≤ \|X^Γ\|.
\end{align}

All of the four two qubit states in our decomposition are Bell
diagonal states with the same set of eigenvalues.  Thus, the two qubit states
are separable (and, hence, ρ is separable) if all their eigenvalues are
less than or equal to $\frac12$ \footnote{This can be directly
  verified by positivity of partial transpose \cite{Peres96,sep1996}}.
For our decomposition this happens if, additionally to
\eqref{eq:sep-precond-lambda}, the following conditions are satisfied
\begin{align}
  λ_1 &≤ \frac12 \\
  λ_2 &≤ \frac12 \\
  \label{eq:sep-as-ppt-1}
  |λ_3 - λ_4| &≤ (λ_1 + λ_2) \|X^Γ\|
\end{align}
or equivalently, additionally to \eqref{eq:sep-precond-beta}, the
following conditions are satisfied
\begin{align}
  \label{eq:sep-alpha-cond}
  |α| & ≤ \frac{1-p}{p} \\
  \label{eq:sep-as-ppt-2}
  |β| & ≤ \frac{p}{1-p} \|X^Γ\|.
\end{align}
Note that conditions \eqref{eq:sep-as-ppt-1} and
\eqref{eq:sep-as-ppt-2} are identical to the PPT conditions for ρ
given by \eqref{eq:ppt-cond-1b} and \eqref{eq:ppt-cond-beta},
respectively.

The decomposition into the four two qubit states has the form
\begin{multline}
  \label{eq:4x2quit-decomposition}
  ρ_U = \frac{|u_{00}|}{u}ρ_{00}(|00⟩_{AA'}, |10⟩_{AA'}; |00⟩_{BB'}, |10⟩_{BB'}) \\
  + \frac{|u_{01}|}{u}ρ_{01}(|00⟩_{AA'}, |11⟩_{AA'}; |01⟩_{BB'}, |10⟩_{BB'}) \\
  + \frac{|u_{10}|}{u}ρ_{10}(|01⟩_{AA'}, |10⟩_{AA'}; |00⟩_{BB'}, |11⟩_{BB'}) \\
  + \frac{|u_{11}|}{u}ρ_{11}(|01⟩_{AA'}, |11⟩_{AA'}; |01⟩_{BB'}, |11⟩_{BB'})
\end{multline}
where $u_{ij}$ are the elements of the unitary matrix on ${\cal C}^2$
used to define operator $X$ in \eqref{eq:X}, $u$~is given by
\eqref{eq:u}, and $ρ_{ij}$ denote the two qubit states given by
\eqref{eq:rho_ij} on figure \ref{fig:rho_ij} where $ϕ_{ij}$ comes from
the polar decomposition of $u_{ij}$
\begin{align}
  u_{ij} = |u_{ij}|e^{i ϕ_{ij}}.
\end{align}

The local basis of Alice and Bob for each of the two qubit states are
given in \eqref{eq:4x2quit-decomposition} in parenthesis.

\subsection{PPT key arbitrary close to separability}

One can obtain key from some $4 ⊗ 4$ PPT states lying arbitrary close
to the set of separable states.  That is, one can easily select a
single parameter subclass of the class $\C$ satisfying
PPT conditions and approaching some separable state with $p=\frac12$
such that for any other state in this class, no matter how close to
the separable state, the key condition \eqref{eq:key-cond-alpha} is
satisfied.  Note that if we chose a separable state with $p \ne
\frac12$ as the final state the key condition would be violated before
reaching that final state; thus, we would not approach with the
key-distillable states arbitrary close to the set of separable states.

\begin{figure}
  \centering
  \includegraphics{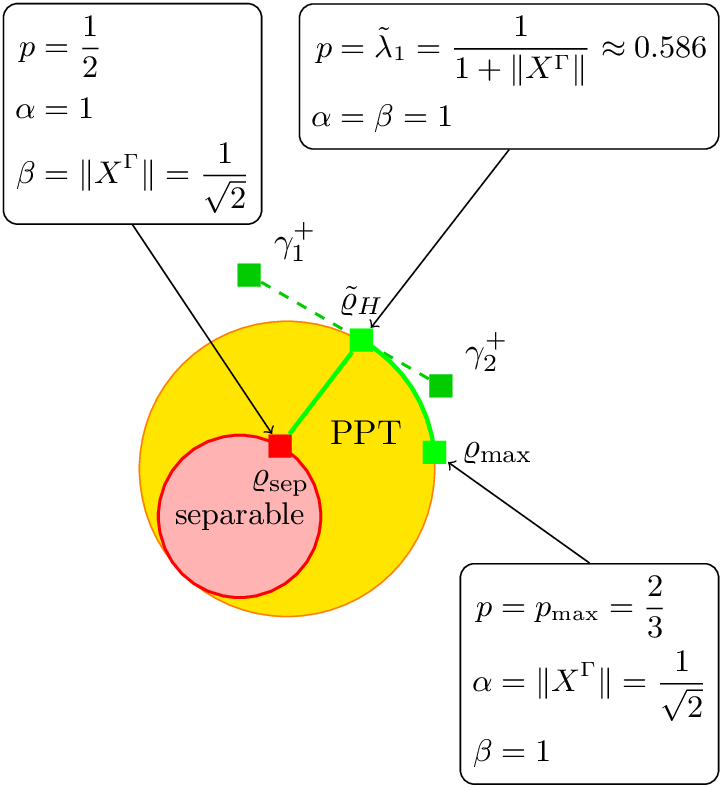}
  \caption{A class of key-distillable PPT entangled states: (a) the
    solid line from $\tilde{ρ}$ on the boundary of the PPT entangled
    states (inclusive) to the boundary of the set of separable states,
    arbitrary close to $ρ_{\mathrm{sep}}$; (b) the arc of
    PPT-invariant states starting in $\tilde{ρ}$ and approaching
    arbitrary close to $\tilde{ρ}_{{\max}}$.}
  \label{fig:key-up-to-sep}
\end{figure}

Such a class of states, a subclass of $ρ_H$, is illustrated in figure
\ref{fig:key-up-to-sep}.  The dashed line represents the subclass
$\tilde\varrho_H$, given by \eqref{eq:subclass}, a mixture of two
pbits ($γ_1^+$ and $γ_2^+$) which in alternate parametrization is
equivalent to $p\in[0,1]$ and $\alpha=\beta=1$.  As shown in
\cite{lowdim-pptkey}, this class contains exactly one (boundary) PPT
entangled state obtained by setting $p=\tilde{λ}_1$ where
$\tilde{λ}_1$ is given by \eqref{eq:subclass-ppt-cond}, otherwise the
states are NPT.

The solid line represents a class of PPT key-distillable states
obtained by setting $p\in(\frac12,p_{{\max}})$, $α=\min(1,α_1)$, and
$β=\min(1, α_1^{-1})$, where $p_{{\max}}=(1+\|X^Γ\|^2)^{-1}=\frac23$,
see observation \ref{obs:p-range-of-ppt-key}, while $α_1$ is given by
\eqref{eq:alpha1}, i.e., $α_1=\frac{1-p}{p}\sqrt{2}$ in the considered
case.  In the range $p\in(\frac12,\tilde{λ}_1]$ the class
is represented as a straight line from the PPT state of the previous
class $\tilde{ρ}_H$ on one end ($p=\tilde{λ}_1$) and approaches
arbitrary close to the separable state $ρ_{\mathrm{sep}}$ ($p=\frac12$)
on the other end.  In the range $p\in[\tilde{λ}_1, p_{{\max}})$ the
states are PPT-invariant and lie on the boundary of PPT entangled
states, they are represented as an arc from the PPT state of the
previous class $\tilde{ρ}_H$ on one end ($p=\tilde{λ}_1$) and approach
arbitrary close to the state $ρ_{{\max}}$ ($p=p_{{\max}}$) on the other
end.  In the range $p\in(\tilde{λ}_1, p_{{\max}})$ one could take
$α<α_1$, such that the key condition \eqref{eq:key-cond-alpha} is
still satisfied, to enter inside the class of PPT states.

\section{States $\varrho_H$ as mixtures of Bell states with `flags'}
\label{sec:exp}

States of the class $\varrho_H$ are separable in the $AB:A'B'$ cut,
i.e., subsystems $AB$ and $A'B'$ of $\varrho_H$ are are only
classically correlated.  A state from $ρ_H$ can be decomposed into a
mixture of four states.  Each of the four states has a Bell state
$ψ_i$ on the subsystem $AB$ and some corresponding state on $A'B'$.

One can select parameters $p ∈ [0,1]$, $α ∈ [-1,1]$, and $β ∈ [-1,1]$
satisfying both the PPT conditions \eqref{eq:ppt-cond-alpha} and
\eqref{eq:ppt-cond-beta} and the key condition
\eqref{eq:key-cond-alpha}, and prepare a corresponding PPT
key-distillable state from the class $ρ_H$ which has the form
\begin{align}
  \varrho_H = \sum_{i=1}^4 q_i \, |ψ_i⟩⟨ψ_i|_{AB} \otimes \varrho_{A'B'}^{(i)}
\end{align}
where the Bell states $ψ_i$ are given by \eqref{eq:bell-states} and the
correlated states are the following:
\begin{align}
  \varrho^{(1)}
  &= \alpha \frac12 (P_{00} + P_{\psi_3}) + (1-\alpha) \frac{I}{4} \\
  \varrho^{(2)}
  &= \alpha \frac12 (P_{11} + P_{\psi_4}) + (1-\alpha) \frac{I}{4} \\
  \varrho^{(3,4)} &= \beta P_{\chi_\pm} + (1-\beta) \frac12 (P_{00} + P_{11})
\end{align}
where $P_{ψ}$ denotes the projector onto a pure state ψ and
\begin{align}
  \chi_\pm &= \frac{1}{\sqrt{2 \pm \sqrt2}} (|00\> \pm |\psi_1\>) \\
  q_1 &= q_2 = \frac{p}{2} \\
  q_3 &= q_4 = \frac{1 - p}{2}.
\end{align}

\section{Maximizing von Neumann Entropy}
\label{sec:entropy}

In this section, we find $4\otimes4$ key-distillable PPT states with
a quite high von Neumann entropy for two subclasses of the class
$\C$ and summarize the results in a table.

\subsection{For states of the class $ρ_U$}
\label{sec:sup-S-rho_U}

Here, we find the supremum of the von Neumann entropy of the subclass
$ρ_U$ of the class $\C$ with $X$ given by \eqref{eq:X}
consisting of states that are both PPT and key-distillable by
corollary~\ref{cor:key}.  Let as denote this set of states as ${\cal
  PK}_d$, subscripted with the dimension of the unitary used to define
operator $X$.

As $\varrho$ is a mixture of four orthogonal private bits its von
Neumann entropy is given by
\begin{multline}
  \label{eq:S}
  S(\varrho_U) = H(p)
  + p \left(H\left(\frac{1-\alpha}{2}\right)
    + S(\sqrt{X^\dagger X})\right) \\
  + (1 - p) \left(H\left(\frac{1-\beta}{2}\right)
    + S(\sqrt{Y^\dagger Y})\right)
\end{multline}
where
\begin{align}
  \label{eq:S-XX}
  S(\sqrt{X^\dagger X}) &≤ 2\log_2 d \\
  S(\sqrt{Y^\dagger Y}) &= \log_2 d
\end{align}
and the maximal value in \eqref{eq:S-XX} is achieved if the unitary
used to define $X$ in \eqref{eq:X} is unimodular.  A unimodular
unitary also maximizes the allowed range of $p$ given by observation
\ref{obs:p-range-of-ppt-key}, as it achieves minimum of $\|X^Γ\|$.
Hence, to maximize the entropy, it is enough to consider a unimodular
unitary.  The supremum is achieved for a state with $p=p_{{\max}}$,
$β=0$, and $\alpha=\sqrt{\frac{1-p}{p}}$ (which no longer satisfies
our key-distillability condition) thus
\begin{multline}
  \sup_{\varrho_U\in {\cal PK}_d} S(\varrho_U) =
  \sup_{p\in(\frac12, p_{\max})}
  \Bigg(
  (1+p)\log_2 d + (1 - p) \\
  + H(p) + p H\left(\textstyle\frac{1-\sqrt{\frac{1 - p}{p}}}{2}\right)
 \Bigg)
\end{multline}
where $p_{\max}= (1 + \|X^Γ\|^2)^{-1}$ comes from observation
\ref{obs:p-range-of-ppt-key}.

In particular, for $d=2$, i.e., ρ being $4 ⊗ 4$ states, the supremum
is achieved for state having $p=p_{\max}=2/3$ which gives
\begin{align}
  \sup_{\varrho_U\in {\cal PK}_2} S(\varrho_U) \approx 3.319.
\end{align}
The supremum corresponds to a state $ρ_{{\max}}$ on figure
\ref{fig:key-up-to-sep} but with $β=0$.

\subsection{For states of a class larger than $ρ_U$}
\label{sec:max-S-spider-Y}

For the subclass ρ of the class $\C$ with $d=2$ and $X$
and $Y$ given by \eqref{eq:spider-Y}, we are able to obtain
\begin{align}
  S(ρ) \approx 3.524
\end{align}
for $U_1=U_2=H$, $q \approx 0.683$, $β=0$ and α, $p$ taken as in the
previous subsection.  It seems to be the supremum of the von Neumann
entropy for this selection of operators $X$ and $Y$.

\subsection{Summary}

Here, we summarize the results of maximizing von Neumann entropy of
$4\otimes4$ key-distillable PPT states in the following table:
\medskip

\noindent
\begin{tabular}{lp{7.5cm}}
  \toprule
  $S(ρ)$ & $ρ$ satisfying PPT and key conditions \\
  \midrule
  2.564 & class $\tilde{ρ}$ from \cite{lowdim-pptkey}
  with $p=\tilde{λ}_1$, the maximum is achieved for $U=H$ \\
  3.319 & class $ρ_U$, the supremum is described
  in \mbox{Sec.} \ref{sec:sup-S-rho_U} \\
  3.524 & class $\C$ with $Y$ given by \eqref{eq:spider-Y},
  a supposed supremum is described in \mbox{Sec.} \ref{sec:max-S-spider-Y} \\
  \bottomrule
\end{tabular}

\section{Distillability via erasure channel}
\label{sec:erasure}

In \cite{SmithYard-2008}, it was shown that two zero capacity
channels, if combined together, can have nonzero capacity. One of the
channels was related (through so called Choi-Jamio\l{}kowski (CJ)
isomorphism) to a bound entangled but key-distillable state, while the
other was a so called symmetrically extendable channel.  In
particular, they considered an example, where the first channel had $4
⊗ 4$ CJ state from the class \eqref{eq:subclass} while the second one
was the 50\%-erasure channel.  In
\cite{Jonathan-two-wrongs-make-right} a simpler scheme was proposed,
which also allows to observe this curious phenomenon.

The second approach amounts to sending a subsystem $A'$ of a state
defined on systems $ABA'B'$ through the 50\%-erasure channel and
checking the coherent information of the resulting state. If it is
positive one concludes that the capacity of combined channel is also
positive.  Here, we shall use this approach to see how the presence of
coherence $\beta$ influence the phenomenon.

Coherent information after sending the $A'$ subsystem through the
50\%-erasure channel is given by
\begin{align}
  I_{\mathrm{coh}} = \frac12 (S_{A'BB'} - S) + \frac12 (S_{BB'} - S_{ABB'})
\end{align}
where $S$, $S_{A'BB'}$, and $S_{BB'}$ are given by \eqref{eq:S},
\eqref{eq:S_A'BB'}, and \eqref{eq:S_BB'}, respectively.

For a PPT state $\tilde{ρ}$ given by \eqref{eq:subclass} with $X$
given by \eqref{eq:X} and based on unimodular unitary and
$λ_1=\tilde{λ}_1$, where $\tilde{λ}_1$ is given by
\eqref{eq:subclass-ppt-cond}, the coherent information is positive
starting from $d=11$.  For a similar state of our class with
$p=\tilde{λ}_1$, $α=1$ and $β=0$ the coherent information is positive
starting from $d=22$.

Formulas for $S_{A'BB'}$ and $S_{BB'}$ are as follows:
\begin{multline}
  \label{eq:S_A'BB'}
  S(\varrho_{A'BB'}) = 1
  + \frac12 S\left( p \sqrt{X X^\dagger} + (1 - p) \sqrt{Y^\dagger Y}\right) \\
  + \frac12 S\left( p \sqrt{X^\dagger X} + (1 - p) \sqrt{Y Y^\dagger}\right)
\end{multline}
\begin{multline}
  \label{eq:S_BB'}
  S(\varrho_{BB'}) = 1
  + \frac12 S_B\left( p \sqrt{X X^\dagger} + (1 - p) \sqrt{Y^\dagger Y}\right) \\
  + \frac12 S_B\left( p \sqrt{X^\dagger X} + (1 - p) \sqrt{Y Y^\dagger}\right).
\end{multline}

\section{Condition for drawing secure key from general states}
\label{sec:general-key}

From \mbox{Sec.} \ref{sec:key}, we have a sufficient condition for
drawing key in terms of norms of the nonzero blocks from states having a
Bell diagonal privacy-squeezed state.  In this section, we
generalize that condition to the case of an arbitrary state.

Let us define two twirling operations (\mbox{cf.} \cite{BDSW1996})
\begin{align}
  Λ_{XX} &= \frac12 ( \hat I ⊗ \hat I + \hat X ⊗ \hat X ) \\
  Λ_{ZZ} &= \frac12 ( \hat I ⊗ \hat I + \hat Z ⊗ \hat Z )
\end{align}
and one twirling with flags
\begin{align}
  Λ_{XX}'(ρ) &= \frac12 ( ρ ⊗ |0⟩⟨0| + \hat X ⊗ \hat X (ρ) ⊗ |1⟩⟨1| )
\end{align}
where $\hat U ρ= U ρ U^†$, $X$ and $Z$ are Pauli matrices.

Now, we give a sufficient condition to obtain key from a general
state.

\begin{proposition}
  \label{prop:generic-key}
  For an arbitrary state
  \begin{align}
    \varrho =
    \begin{bmatrix}
      A & B & C & D \\
      B^{†} & E & F & G \\
      C^{†} & F^{†} & H & I \\
      D^{†} & G^{†} & I^{†} & J
    \end{bmatrix}
  \end{align}
  if
  \begin{align}
    \label{eq:th-general}
    \max(\|D\|,\|F\|) > \frac12 \sqrt{(\|A\|+\|J\|)(\|E\| + \|H\|)}
  \end{align}
  then Alice and Bob can distill cryptographic key by first applying
  twirling $Λ_{XX}' \circ Λ_{ZZ}$ to the key part and measuring the
  key part of many
  copies of the state ρ and than using the recurrence and the Devetak-Winter
  protocol.
\end{proposition}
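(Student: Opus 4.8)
The strategy is to reduce the general $4\times 4$ block state $\varrho$ to a spider state to which Proposition \ref{prop:key} already applies, using the two twirlings $\Lambda_{XX}\circ\Lambda_{ZZ}$ (the flagged version $\Lambda'_{XX}$ is a bookkeeping refinement I will address at the end). First I would compute the action of $\Lambda_{ZZ}=\tfrac12(\hat I\otimes\hat I+\hat Z\otimes\hat Z)$ on the key part. Since $\hat Z\otimes\hat Z$ assigns phase $+1$ to the basis states $|00\rangle,|11\rangle$ and phase $-1$ to $|01\rangle,|10\rangle$, conjugation by $Z\otimes Z$ flips the sign of exactly those blocks that couple the even-parity subspace $\{|00\rangle,|11\rangle\}$ to the odd-parity subspace $\{|01\rangle,|10\rangle\}$. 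Averaging therefore annihilates precisely the off-parity blocks $B,C,G,I$ (and their adjoints), leaving
\begin{align}
  \Lambda_{ZZ}(\varrho)=
  \begin{bmatrix}
    A & & & D \\
    & E & F & \\
    & F^{\dagger} & H & \\
    D^{\dagger} & & & J
  \end{bmatrix},
\end{align}
which is already of the spider form \eqref{eq:spider}. Crucially, the surviving blocks $A,D,E,F,H,J$ are unchanged, so their trace norms are exactly those appearing in \eqref{eq:th-general}.

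Next I would apply $\Lambda_{XX}=\tfrac12(\hat I\otimes\hat I+\hat X\otimes\hat X)$. The operator $X\otimes X$ acts on the key basis by $|ij\rangle\mapsto|\bar i\,\bar j\rangle$, i.e.\ it swaps $|00\rangle\leftrightarrow|11\rangle$ and $|01\rangle\leftrightarrow|10\rangle$, while conjugating the shield blocks by the corresponding unitary $U_{ij}$. Conjugating the spider state by $X\otimes X$ therefore interchanges the two corner blocks $A\leftrightarrow J$ and the two middle-diagonal blocks $E\leftrightarrow H$, while fixing the coherence structure of $D$ and $F$. Averaging with the identity symmetrizes the diagonal, producing a spider state whose corner diagonal blocks both have trace norm $\tfrac12(\|A\|+\|J\|)$ and whose middle diagonal blocks both have trace norm $\tfrac12(\|E\|+\|H\|)$; hence $\|C\|=\|C'\|$ and $\|E\|=\|E'\|$ in the notation of Proposition \ref{prop:key}. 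The off-diagonal norms must now be controlled: by the triangle inequality for the trace norm the symmetrized off-diagonal block has norm \emph{at least} the relevant $\|D\|$ or $\|F\|$ only if the averaging does not cause cancellation, and this is exactly the point where the flagged twirling $\Lambda'_{XX}$ is needed. By tagging the two branches with orthogonal flags $|0\rangle\langle0|$ and $|1\rangle\langle1|$ on an ancilla, one keeps the two off-diagonal contributions in distinct sectors, so no cancellation can occur and the relevant off-diagonal norm is preserved at its original value $\|D\|$ (resp.\ $\|F\|$). The diagonal symmetrization is unaffected by the flag since the diagonal blocks are summed incoherently.

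Putting the pieces together, the twirled-and-flagged state is a spider state with Bell-diagonal privacy-squeezed form satisfying $\|C\|=\|C'\|=\tfrac12(\|A\|+\|J\|)$, $\|E\|=\|E'\|=\tfrac12(\|E\|+\|H\|)$, and off-diagonal norms at least $\|D\|$ and $\|F\|$. Feeding these into the key condition \eqref{eq:th-key-cond} of Proposition \ref{prop:key} gives exactly
\begin{align}
  \max(\|D\|,\|F\|)>\sqrt{\tfrac12(\|A\|+\|J\|)\cdot\tfrac12(\|E\|+\|H\|)}
  =\tfrac12\sqrt{(\|A\|+\|J\|)(\|E\|+\|H\|)},
\end{align}
which is precisely \eqref{eq:th-general}, so the proposition follows. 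I expect the main obstacle to be the second paragraph: verifying that the flagged twirling $\Lambda'_{XX}$ genuinely prevents the trace norm of the off-diagonal block from shrinking under averaging. Because twisting/privacy-squeezing is insensitive to such flags (they can be handed to Eve without loss, as recalled in Sec.\ \ref{sec:prelim}), the flag does not degrade the extractable key, but one must argue carefully that the privacy-squeezed norm of the flagged off-diagonal block equals the original $\|D\|$ or $\|F\|$ rather than an averaged quantity — this is where the ``no more key than the original state'' guarantee of the privacy-squeezing construction must be invoked to justify that discarding the flag to Eve is legitimate.
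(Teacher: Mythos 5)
Your proposal is correct and follows essentially the same route as the paper: $\Lambda_{ZZ}$ annihilates the off-parity blocks $B,C,G,I$ leaving a spider state, the flagged twirling $\Lambda'_{XX}$ turns the blocks into direct sums (e.g.\ $D\oplus D^\dagger$, $A\oplus J$) so that the diagonal blocks symmetrize to norm $\tfrac12(\|A\|+\|J\|)$ and $\tfrac12(\|E\|+\|H\|)$ while the off-diagonal norms are preserved, and Proposition \ref{prop:key} applied to the privacy-squeezed Bell-diagonal state gives exactly \eqref{eq:th-general}. The only caveat you raise at the end resolves trivially, with no appeal to the ``no more key'' ccq argument or to handing flags to Eve: the trace norm is additive over the flagged direct sum, $\bigl\|\tfrac12\bigl(D\otimes|0\rangle\langle0|+D^\dagger\otimes|1\rangle\langle1|\bigr)\bigr\| = \tfrac12\bigl(\|D\|+\|D^\dagger\|\bigr)=\|D\|$, which is precisely the paper's remark that thanks to flags one has direct sums within the blocks.
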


\begin{remark}
  Note that the right-hand side of \mbox{Eq.} \eqref{eq:th-general}
  can also be written as $\frac12 \sqrt{p_e (1-p_e)}$ where $p_e$ is
  the probability of error (i.e. anticorrelation) when key part is
  measured in standard basis.
\end{remark}

\begin{proof}[Proof of the proposition \ref{prop:generic-key}]
  Alice and Bob first apply twirling $Λ_{XX}' \circ Λ_{ZZ}$ (an LOCC
  operation) to the key part and obtain the following state
  \begin{multline}
    Λ'_{XX} \circ Λ_{ZZ} (ρ)\\
    =\begin{bmatrix}
      A \oplus J & & & D\oplus D^† \\
      & E  \oplus H & F  \oplus F^† \\
      & F  \oplus F^† &  E  \oplus H \\
      D  \oplus D^† & & & A  \oplus J
    \end{bmatrix}.
  \end{multline}
  This state is now of the spider form and, thanks to flags, we have
  direct sums within the blocks. Now, the privacy-squeezed state has
  the following Bell diagonal form
  \begin{multline}
    \sigma = \\
    =\small\begin{bmatrix}
      \|A\|  + \|J\| & & & \|D\|  + \|D^†\| \\
      & \|E\|  + \|H\| & \|F\|  + \|F^†\| \\
      & \|F\|  + \|F^†\| &  \|E\|  + \|H\| \\
      \|D\|  + \|D^†\| & & & \|A\|  + \|J\|
    \end{bmatrix}.
  \end{multline}
  Then the proof follows from proposition~\ref{prop:key}.
\end{proof}

Note that in the proof above we use $Λ_{XX}'$, a twirling with flags.
If $Λ_{XX}$, a twirling without flags, were used instead we would have
to replace $\|D\|$ with $\|D + D^†\|$ in \eqref{eq:th-general}
(analogously for $\|F\|$) which can be much smaller than $\|D\|$, and
even equal to zero in the extreme case of antihermitian $D$, i.e.,
$D^†=-D$, so in this case no key can be distilled from $Λ_{XX}(ρ)$
even if ρ is a private state, i.e., $ρ=γ(D)$.

Note also, that in the proof, we have first applied twirling with
flags to the original state, and then the privacy-squeezing
operation. Actually, the same state would be obtained if we first
apply the privacy squeezing and then apply (standard) twirling. This
is illustrated by the following diagram
\begin{align}
  \begin{CD}
    ρ @>{Λ'_{XX} \circ Λ_{ZZ}}>> ρ' \\
    @V{P_{sq}}VV @VV{P_{sq}}V \\
    σ @>{Λ_{XX} \circ Λ_{ZZ}}>> σ'
  \end{CD}
\end{align}
where $P_{sq}$ stands for the privacy squeezing.  As explained above, this
diagram would not commute if we used solely twirling without flags.
Thus, to seek for key-distillable states, one can go the alternative
route, i.e., first compute the privacy-squeezed state, and then, by
twirling, obtain a Bell diagonal state.  Now, if $Λ_{XX} \circ
Λ_{ZZ}(σ)$ satisfies necessary security condition for realistic QKD on
a Pauli channel from \cite{acin-2006-73}, i.e., its eigenvalues $λ_i$
satisfy \eqref{eq:key-cond-lambda}, then ρ is key-distillable using
proposition~\ref{prop:generic-key}.

\section{Appendix}

The parametrization of a single qubit unitary \cite{Nielsen-Chuang}:
\begin{align}
  \label{eq:qubit-unitary}
  U &= e^{i\alpha}
  \begin{bmatrix}
    e^{i\*\left(-\frac{\beta}{2}-\frac{\delta}{2}\right)}
    \*\cos\left(\frac{\gamma}{2}\right) &
    -e^{i\*\left(-\frac{\beta}{2} + \frac{\delta}{2}\right)}
    \*\sin \left(\frac{\gamma}{2}\right)\\[0.2ex]
    e^{i \*\left(\frac{\beta}{2}-\frac{\delta}{2}\right)}
    \*\sin \left(\frac{\gamma}{2} \right) &
    e^{i\*\left(\frac{\beta}{2}+\frac{\delta}{2}\right)}
    \*\cos \left(\frac{\gamma}{2}\right)
  \end{bmatrix}.
\end{align}

\section{Acknowledgment}

The work is supported by Polish Ministry of Science and Higher
Education grant \mbox{no.} 3582/B/H03/2009/36 and by the European
Commission through the Integrated Project FET/QIPC QESSENCE.  This
work was done in National Quantum Information Centre of Gda\'nsk.

\vfill

\bibliographystyle{apsrev4-1long}
\bibliography{lupan,rmp12-hugekey}

\end{document}